\documentclass[11pt]{article}
\usepackage[T1]{fontenc}
\usepackage{lmodern}
\usepackage{amsmath,amssymb,amsthm}
\usepackage[margin=1.1in]{geometry}
\usepackage{fancyhdr}
\usepackage{hyperref}
\hypersetup{colorlinks=true,linkcolor=black,citecolor=blue,urlcolor=blue,
  pdftitle={Bounds on the real tensor rank of octonion multiplication},
  pdfauthor={Hardik Jain}}

\fancypagestyle{firstpage}{%
  \fancyhf{}%
  \fancyfoot[R]{\small\thepage}%
}

\theoremstyle{plain}
\newtheorem{theorem}{Theorem}
\newtheorem{lemma}[theorem]{Lemma}
\newtheorem{proposition}[theorem]{Proposition}
\theoremstyle{definition}
\newtheorem{remark}[theorem]{Remark}

\newcommand{\R}{\mathbb{R}}
\newcommand{\C}{\mathbb{C}}
\newcommand{\HH}{\mathbb{H}}
\newcommand{\OO}{\mathbb{O}}
\newcommand{\rank}{\mathrm{R}}
\newcommand{\Nm}{\mathrm{N}}
\DeclareMathOperator{\rk}{rank}

\begin{document}

\title{\bfseries Bounds on the real tensor rank of octonion multiplication}

\author{Hardik Jain\\[3pt]
  \normalsize\href{mailto:mxthxmxtxcs@outlook.com}{\texttt{mxthxmxtxcs@outlook.com}}}
\date{}

\maketitle
\thispagestyle{firstpage}

\begin{abstract}
The tensor rank of a bilinear map is the least number of multiplications any
bilinear algorithm needs to compute it; for the multiplication of an algebra
it measures how cheaply the algebra can be multiplied at all. For the
even-dimensional real normed division algebras it is $3$ for the complex numbers
and $8$ for the quaternions, both classical, while for the octonions $\OO$
only a range was known: at least $15$ (Fiduccia and Zalcstein, 1977) and at
most $30$ (Cariow and Cariowa). We prove
\[
  18 \;\le\; \rank_\R(T_\OO) \;\le\; 25 .
\]
The lower bound peels the eight slices of $T_\OO$ down to two and bounds the
rank of the surviving pencil through the octonion norm. Nothing in it is
special to dimension $8$: the same steps give $\rank_\R(T_A)\ge\frac52 n-2$
for every real normed division algebra $A$ of even dimension $n$, sharp for $\C$
and $\HH$ and the best bound we know for $\OO$. The upper bound is a separate
construction, an explicit rank-$25$ decomposition certified by a Krawczyk
argument, in exact rational arithmetic, to sit within $10^{-6}$ of an exact
one. The same two arguments pin down the rank of a smaller three-slice
quaternion tensor $\tau$, giving $\rank_\R(\tau)=7$. The Lean~4 kernel checks
the lower bounds and the Krawczyk existence principle; the accompanying
scripts check the certificate's finitely many exact-rational inequalities.
\end{abstract}

\begin{center}
The formalization, the certificates, and their verification are available at\\
\url{https://github.com/hxrdxkxvxd/octonion-rank}.
\end{center}

\section{Introduction}

A bilinear algorithm computes a bilinear map $\beta$ by forming a number of
products, each of a linear form in the first argument with a linear form in the
second, and returning linear combinations of those products. Additions and
multiplications by fixed scalars are not counted; the products are, and the
least number of them that suffices is the tensor rank of $\beta$, which
coincides with the rank of the order-$3$ tensor associated with $\beta$
\cite{Strassen73,BCS,Landsberg}. Asking for this number for the multiplication
map of an algebra is asking how cheaply that algebra can be multiplied at all,
by any method of this kind, and the answer is a property of the algebra rather
than of any particular scheme for computing in it. For the real normed
division algebras the answer is classical in low dimension: complex multiplication has
rank $3$ \cite{deGroote}, quaternion multiplication rank $8$
\cite{HowellLafon}. For
the octonions $\OO$ \cite{Baez,SpringerVeldkamp} it is not known.

What is known is a range. The naive algorithm uses $64$ multiplications;
expanding the Cayley--Dickson formula
$(a,b)(c,d)=(ac-\bar d b,\, da+b\bar c)$ and computing each of its four
quaternion products by an optimal eight-multiplication algorithm uses $32$,
as does the dedicated octonion algorithm of Cariow and Cariowa~\cite{Cariow2012};
their unified method for hypercomplex multiplication~\cite{Cariow} uses $30$. In the
other direction, Fiduccia and Zalcstein \cite{FiducciaZalcstein} proved in 1977
that every finite-dimensional real algebra without zero divisors has
multiplicative complexity at least $2n-1$ (their Theorem~6), recording the case
of the Cayley numbers explicitly as $\ge 15$ (their Example~4); multiplicative
complexity is counted over general arithmetic chains and so lower-bounds the
rank. We narrow the range on both sides, to
\[
  18\ \le\ \rank_\R(T_\OO)\ \le\ 25 .
\]
Neither bound is known to be sharp, and closing the gap remains open.

The lower bound comes from a substitution, or peeling, argument that removes
the tensor's slices one at a time, each worth a unit of rank, stopped one step
before its natural end and completed by a rank estimate for the two-slice
pencil that remains.
The argument uses nothing specific to dimension $8$: the same steps bound
$\rank_\R(T_A)$ for every even-dimensional real normed division algebra $A$
(Theorem~\ref{thm:uniform}), and the octonion gain over Fiduccia--Zalcstein is
one instance of a uniform improvement. The upper bound is obtained
differently, by exhibiting a numerical rank-$25$ decomposition and certifying
that an exact one lies near it.

The lower bounds, the pencil estimate, and the certification principle behind
the upper bound are formalized in the Lean~4 proof assistant
\cite{Lean4,Mathlib} and accepted by its kernel. The certificate itself, some
$2.6\times10^{5}$ rational numbers, is checked by the scripts in the
repository rather than inside Lean; the boundary between the two is stated
where it arises.

\subsection*{Notation and conventions}
Let $\OO$ be the real octonion algebra with standard basis $e_0,\dots,e_7$,
$e_0=1$, norm form $\Nm(x)=\sum_i x_i^2$, and conjugation $x\mapsto\bar x$.
For $x\in\OO$ let $L_x\colon\OO\to\OO$ denote left multiplication, an
$\R$-linear endomorphism of $\OO\cong\R^8$; the composition property of the
norm gives $L_{\bar x}L_x=\Nm(x)\,I$. We work with the structure tensor
$T_\OO$ through its first-factor slices, the matrices of the operators
$L_{e_0},\dots,L_{e_7}$; tensor rank is unchanged by permuting or transposing
the three factors, so no generality is lost in this choice. For a finite family
$S=(S_a)$ of $n\times n$ real matrices, the real rank $\rank_\R(S)$ is the
least $r$ for which there exist $u_1,\dots,u_r,v_1,\dots,v_r\in\R^n$ and
scalars $f_{s,a}$ with $S_a=\sum_{s=1}^{r} f_{s,a}\,u_s v_s^{\top}$ for every
$a$; this is the tensor rank of the associated order-$3$ tensor \cite{BCS}. It
is unchanged under an invertible linear substitution in any one of the three
factors: replacing the family by $\big(\sum_b g_{ab}S_b\big)$ for an invertible
matrix $(g_{ab})$, or every $S_a$ by $PS_aQ$ for invertible $P,Q$, leaves
$\rank_\R$ the same. We use both normalizations on the surviving pencil below.

\section{The lower bound}

\begin{theorem}\label{thm:main-lower}
$\rank_\R(T_\OO)\ge 18$.
\end{theorem}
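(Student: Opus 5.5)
The plan is a substitution (peeling) argument that removes six of the eight slices $L_{e_0},\dots,L_{e_7}$, each at a cost of one unit of rank. It stops when a two-slice pencil remains, and then bounds the rank of that pencil from below by $12=\tfrac32\cdot 8$. This gives $6+12=18$, matching $\tfrac52 n-2$ at $n=8$.

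\textbf{Step 1 (peeling).} Suppose $\rank_\R(T_\OO)=r$, with $L_{e_a}=\sum_{s=1}^r f_{s,a}u_sv_s^\top$. Choose a rank-one term $s$ and a slice index $a$ with $f_{s,a}\ne 0$. For each $b\ne a$, replace $L_{e_b}$ by $L_{e_b}-(f_{s,b}/f_{s,a})L_{e_a}$. The modified family of the remaining slices then admits a decomposition with $r-1$ terms, since term $s$ has been cancelled from all of them. Iterate six times, each time choosing a surviving slice that some remaining rank-one term touches; as long as slices survive, some term has a nonzero coefficient on them.

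The surviving two slices are $L_p$ and $L_q$, because $x\mapsto L_x$ is linear. Here $p$ and $q$ are each a basis vector $e_i$ plus a combination of the peeled basis vectors, so they are linearly independent. Hence $r-6\ge \rank_\R(L_p,L_q)$ for some linearly independent $p,q\in\OO$.

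\textbf{Step 2 (normalizing the pencil).} The pencil spans $\{L_x : x\in\operatorname{span}(p,q)\}$, and a change of basis within the span does not alter its rank. So we may replace $p,q$ by an orthogonal pair with $\Nm(p)=1$ and $\langle p,q\rangle=0$. Then left-multiply both slices by $L_{\bar p}$, which is invertible, to get the pencil $(I,M)$ with $M=L_{\bar p}L_q$. Polarizing $L_{\bar x}L_x=\Nm(x)I$ gives $L_{\bar p}L_q+L_{\bar q}L_p=2\langle p,q\rangle I=0$. Therefore
\[
M^2=L_{\bar p}L_qL_{\bar p}L_q=-L_{\bar p}L_qL_{\bar q}L_p=-\Nm(q)\,L_{\bar p}L_p=-\Nm(q)\,I .
\]
So $M$ has minimal polynomial $t^2+\Nm(q)$, which is irreducible over $\R$. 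Consequently $M$ is similar to a direct sum of four copies of the $2\times2$ companion block.

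\textbf{Step 3 (pencil rank, the main obstacle).} We need $\rank_\R(I,M)\ge 8+4=12$. We would invoke the Grigoriev--Ja'Ja' formula: an $n\times n$ pencil $(I,M)$ has rank $n+\delta(M)$, where $\delta$ counts the invariant factors of $M$ that do not split into distinct linear factors over $\R$. Here all four invariant factors equal $t^2+\Nm(q)$, so $\delta=4$.

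If a self-contained argument is preferred, we would try a direct one. Suppose $I=UD_1V^\top$ and $M=UD_2V^\top$ with $r$ terms. Restricting to pencil members $sI+tM$ and a dimension count shows that if $r<12$, some real combination of the rank-one terms forces a real eigenvector or a Jordan-type degeneracy of $M$ on an invariant subspace. That contradicts $M^2=-\Nm(q)I$ with $\Nm(q)>0$. Alternatively, we could reduce by induction on the number of $2\times2$ blocks, peeling one block at a time with a cost of $3$ each.

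This step is where the real content lies and where care is needed. Combining the steps gives $r\ge 6+12=18$.
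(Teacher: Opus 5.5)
Your proof is correct. Steps 1--2 are the paper's argument in slightly different form. You use the same six substitutions, each preserving the shape $(L_{x_i})$ with independent $x_i$. You use the same polarized composition identity to reduce the residual pencil to $(I,M)$ with $M$ annihilated by an irreducible real quadratic. The one variation is that you orthonormalize first and get $M^2=-\Nm(q)I$ directly, the middle equality being the composition identity at $\bar q$. The paper keeps general $u,v$, uses strict Cauchy--Schwarz for the discriminant, and rescales to $J=c^{-1}(C-aI)$ inside Theorem~\ref{thm:pencil}.

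The real difference is Step 3. You import the Grigoriev--Ja'Ja' formula $\rank_\R(I,M)=n+\delta(M)$. It is valid over $\R$, gives the exact value $12$, and is essentially the invariant-factor route the paper mentions via Sumi, Miyazaki, and Sakata. The paper instead proves only the lower bound it needs, by elementary means. From $I=UA$, $J=UB$ it sets $D=A+BJ$ and gets $UD=I+J^2=0$, so $\rk D\le r-n$. It then shows $\ker D\cap J(\ker D)=0$ via $(\alpha_s^2+\beta_s^2)\langle v_s,w\rangle=0$, so $\rk D\ge n/2$. Your route is shorter on the page but rests on Kronecker--Weierstrass theory and a substantial external theorem. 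The paper's route is self-contained, avoids canonical forms, and is what makes the Lean formalization practical.

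Your fallback sketches in Step 3 would not stand on their own. The first is a heuristic rather than an actual dimension count. The block-by-block induction needs ``adding a $2\times2$ rotation block costs $3$,'' and that fails in general. Take $R_1^2=-I$ and $R_2^2=-4I$. Then $M=R_1\oplus R_2$ is cyclic with minimal polynomial $f=(t^2+1)(t^2+4)$, so $(I_4,M)$ is multiplication by $a+bt$ in $\R[t]/(f)$. Evaluating $(a+bt)g$ at five real points shows its rank is at most $5<3+3$. Any such induction must therefore use that all four blocks share one minimal polynomial, which is exactly what the single relation $J^2=-I$ in the $D=A+BJ$ argument captures.
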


The proof combines a substitution (peeling) principle with a rank estimate
for two-term pencils. We isolate the pencil estimate first, as it carries the
essential content and is stated for general even dimension so that it applies
without change to the quaternion and complex cases.

\begin{theorem}\label{thm:pencil}
Let $n$ be even and let $C\in\R^{n\times n}$ satisfy $C^2-2aC+bI=0$ for some
$a,b\in\R$ with $a^2<b$. Then the two-term pencil $(I,C)$ has real rank at
least $n+n/2$.
\end{theorem}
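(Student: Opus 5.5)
We first normalize $C$. Set $J=(C-aI)/\sqrt{b-a^2}$. The relation $C^2-2aC+bI=0$ becomes $J^2=-I$. Since $(I,J)$ is obtained from $(I,C)$ by an invertible linear substitution in the slice factor, the two pencils have the same real rank. So it suffices to show that $\rank_\R(I,J)\ge n+n/2$ whenever $J^2=-I$.

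Suppose $I=\sum_{s=1}^r f_s u_sv_s^\top$ and $J=\sum_{s=1}^r g_s u_sv_s^\top$. In matrix form, $I=UFV^\top$ and $J=UGV^\top$, where $U,V\in\R^{n\times r}$ and $F,G$ are diagonal. Write $r=n+m$; we aim to show $m\ge n/2$.

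The first step selects $n$ good terms. Since $\det(UFV^\top)=1$, the Cauchy--Binet formula gives
\[
\sum_{|S|=n}\det U_S\,\Big(\prod_{s\in S}f_s\Big)\det V_S=1 .
\]
Hence some $n$-subset $S$ of the terms has $U_S,V_S$ invertible and $F_S$ invertible. Let $T$ be the complementary set of $m$ terms, and let
\[
Y=\{y\in\R^n: v_t^\top y=0 \text{ for all } t\in T\}, \qquad \dim Y\ge n-m .
\]
For $y\in Y$ only the terms in $S$ contribute. Thus $y=U_SF_SV_S^\top y$, which gives $V_S^\top y=F_S^{-1}U_S^{-1}y$. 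Substituting into the expansion of $J$ yields $Jy=My$ for all $y\in Y$, where
\[
M=U_S\,G_SF_S^{-1}\,U_S^{-1}.
\]
The key feature of $M$ is that it is similar to a real diagonal matrix.

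The second step derives a contradiction when $m<n/2$. In that case $\dim Y>n/2$, so $Z=Y\cap J^{-1}Y$ has positive dimension. Indeed, $J$ is invertible, so $\dim J^{-1}Y=\dim Y$, and two subspaces of dimension greater than $n/2$ in $\R^n$ must meet nontrivially.

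Take $0\ne y\in Z$. Then $y\in Y$ and $Jy\in Y$, so applying $J=M$ on $Y$ twice gives
\[
-y=J^2y=J(My)=M(My)=M^2y .
\]
Here we used $My=Jy\in Y$. So $-1$ would be an eigenvalue of $M^2$. But $M^2$ is similar to a real diagonal matrix with nonnegative entries, so this is impossible. Therefore $m\ge n/2$, and $r\ge n+n/2$.

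I expect the main obstacle to be the choice of the $n$ terms $S$: we need $U_S$, $V_S$ and $F_S$ simultaneously invertible so that $M$ is well defined and real-diagonalizable. Cauchy--Binet on $\det(UFV^\top)\neq0$ is the tool I would use for this. A second point to verify is that $Y\cap J^{-1}Y$ is the right intersection: it is exactly what allows $J=M$ to be applied twice. The rest is a dimension count.

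As a sanity check, for $n=2$ the argument gives rank $3$, matching complex multiplication. The bound also agrees with the Ja'Ja' count for pencils whose $n/2$ invariant factors are all irreducible quadratics.
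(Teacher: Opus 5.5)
Your proof is correct, and it reaches the bound by a different route from the paper's. The paper never singles out a subset of terms. It packages the whole decomposition into one $r\times n$ matrix $D=A+BJ$ with rows $\alpha_s v_s^\top+\beta_s v_s^\top J$. It gets $\rk D\le r-n$ from $UD=I+J^2=0$ and the surjectivity of $U$. It gets $\rk D\ge n/2$ by showing $\ker D\cap J(\ker D)=0$ through the row identity $(\alpha_s^2+\beta_s^2)\langle v_s,w\rangle=0$.

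You instead fix $n$ good terms and kill the other $m$ by restricting to $Y$. On $Y$ the complex structure $J$ coincides with $M=U_SG_SF_S^{-1}U_S^{-1}$, whose square has nonnegative spectrum. The skeletons match: your $Y$ plays the role of $\ker D$, and each has dimension at least $2n-r$. In both, the point is that this subspace meets its image under $J$ only in $0$. Realness enters as $\alpha_s^2+\beta_s^2>0$ in the paper and as $g_s^2/f_s^2\ge 0$ in yours.

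Your version gives a transparent substitution-method picture: on a subspace of small codimension, every decomposition acts as a real-diagonalizable operator, which cannot square to $-I$. The paper's is more economical, with a single auxiliary matrix, no selection of terms and no division by the coefficients, which suits the Lean formalization it reports.

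The step you flagged as the main obstacle is easier than Cauchy--Binet. The $u_s$ with $f_s\neq0$ span $\R^n$, since their span contains the image of $I$. So any basis among them gives $U_S$ and $F_S$ invertible, and the invertibility of $V_S$ is never actually used.
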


Informally: the hypothesis $a^2<b$ makes $C$, after an affine change, a
complex structure $J$ with $J^2=-I$. A rank-$r$ decomposition
of $(I,J)$ produces an $r\times n$ matrix $D$ whose rank measures how far the
decomposition is from respecting $J$. One constraint caps that rank at $r-n$,
the amount by which $r$ exceeds $n$; the other is that $J$ carries the kernel of
$D$ off itself, confining the kernel to half of $\R^n$ and forcing
$\rk D\ge n/2$. The two meet at $r\ge n+n/2$.

\begin{proof}
Set $c=\sqrt{b-a^2}>0$ and $J=c^{-1}(C-aI)$; the hypotheses give $J^2=-I$.
Suppose $(I,C)$ has a rank-$r$ decomposition. Passing to $J$ by an invertible
scalar change, one obtains simultaneous representations
\[
  I=\sum_{s=1}^{r}\alpha_s\,u_s v_s^\top,\qquad
  J=\sum_{s=1}^{r}\beta_s\,u_s v_s^\top ,
\]
with common vectors $u_s,v_s$. Form the matrix $U=[u_1\,\cdots\,u_r]$, let
$A,B$ be the $r\times n$ matrices with rows $\alpha_s v_s^\top,\beta_s
v_s^\top$, and put $D=A+BJ$. The two representations give $UA=I$ and $UB=J$,
hence
\[
  UD=UA+(UB)J=I+J^2=0 .
\]
As $U$ has the right inverse $A$, its kernel has dimension $r-n$ and contains
the range of $D$, so $\rk D\le r-n$. On the other hand, the $s$th row of $D$
is $\alpha_s v_s^\top+\beta_s v_s^\top J$, so for $w$ with $Dw=0$ and
$D(Jw)=0$,
\[
  \alpha_s\langle v_s,w\rangle+\beta_s\langle v_s,Jw\rangle=0,\qquad
  \alpha_s\langle v_s,Jw\rangle-\beta_s\langle v_s,w\rangle=0,
\]
the second because $J^2=-I$. Multiplying the first by $\alpha_s$, the second
by $\beta_s$, and subtracting gives
$(\alpha_s^2+\beta_s^2)\langle v_s,w\rangle=0$, whence
$\alpha_s\langle v_s,w\rangle=0$ for all $s$ and therefore
$w=Iw=\sum_s\alpha_s\langle v_s,w\rangle u_s=0$. Thus
$\ker D\cap J(\ker D)=0$ (note $J^{-1}=-J$, so $J(\ker D)$ is exactly the set
of $w$ with $Jw\in\ker D$); as $J$ is invertible the two subspaces have equal
dimension, so $\dim\ker D\le n/2$, $\rk D\ge n/2$, and
\[
  r\ \ge\ n+\rk D\ \ge\ n+n/2. \qedhere
\]
\end{proof}

The argument is coordinate-free and uses neither the Kronecker classification
of pencils nor any complex module structure. A bound of this type is also a
consequence of the invariant-factor analysis of Sumi, Miyazaki, and
Sakata~\cite{SMS}. The second ingredient is the substitution (peeling) lemma;
the standard pivot proof works over any field.

\begin{lemma}[{Substitution; cf.\ \cite[Prop.~3.1]{LM}}]\label{lem:subst}
Let $S_1,\dots,S_{m+1}$ be $n\times n$ real matrices with $S_{m+1}\neq 0$.
Then there exist $c_1,\dots,c_m\in\R$ such that
\[
  \rank_\R\big(S_1+c_1S_{m+1},\,\dots,\,S_m+c_mS_{m+1}\big)
  \ \le\ \rank_\R\big(S_1,\dots,S_{m+1}\big)-1 .
\]
\end{lemma}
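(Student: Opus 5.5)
We take an optimal decomposition and kill one rank-one term by a change of basis in the first factor, using one term whose coefficient on $S_{m+1}$ is nonzero. Let $r=\rank_\R(S_1,\dots,S_{m+1})$ and fix a decomposition
\[
  S_a=\sum_{s=1}^{r} f_{s,a}\,u_s v_s^{\top},\qquad a=1,\dots,m+1 .
\]
Since $S_{m+1}\neq 0$, some coefficient $f_{s,m+1}$ is nonzero. After relabelling we may take $f_{r,m+1}\neq 0$. This choice is the only place the hypothesis enters, and it is what guarantees a pivot exists.

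Now set $c_a=-f_{r,a}/f_{r,m+1}$ for $a=1,\dots,m$. Then for each such $a$,
\[
  S_a+c_aS_{m+1}=\sum_{s=1}^{r}\Big(f_{s,a}-\frac{f_{r,a}}{f_{r,m+1}}\,f_{s,m+1}\Big)u_s v_s^{\top}.
\]
The coefficient of the $s=r$ term is $f_{r,a}-f_{r,a}=0$. So every matrix in the new family is a combination of the $r-1$ common rank-one matrices $u_sv_s^\top$ with $s<r$. By the definition of $\rank_\R$ for families, the family $(S_a+c_aS_{m+1})_{a\le m}$ therefore has rank at most $r-1$.

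There is no real obstacle here; the argument is the standard Gaussian pivot. The only point to check is that the new coefficients $f'_{s,a}$ are genuine real scalars attached to the same vectors $u_s,v_s$, so the decomposition has the required shape. The case $r=0$ is excluded, because $S_{m+1}\neq 0$ forces $r\ge 1$.
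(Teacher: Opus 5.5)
Your proof is correct and is the standard pivot argument the paper relies on: pick a rank-one term whose coefficient on $S_{m+1}$ is nonzero, and set $c_a=-f_{r,a}/f_{r,m+1}$ so that this term drops out of every modified slice. The paper states only that ``the standard pivot proof works over any field'' and refers to the literature, so your write-up supplies exactly the details it omits, including the edge case $r\ge 1$.
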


\begin{proof}[Proof of Theorem~\ref{thm:main-lower}]
Every nonzero linear combination of the slices $L_{e_p}$ is again $L_x$ for a
nonzero $x\in\OO$, so the slices form a family of eight left multiplications by
linearly independent octonions $x_1,\dots,x_8$ (the basis $e_0,\dots,e_7$ in
some order). Apply Lemma~\ref{lem:subst} to the last slice. Whatever the
constants $c_i$, the modified slices $L_{x_i}+c_iL_{x_8}=L_{x_i+c_ix_8}$
$(i=1,\dots,7)$ form a family of seven left multiplications by octonions that
are again linearly independent, so the lemma applies once more; after six
applications,
\[
  \rank_\R(T_\OO)\ \ge\ 6 + \rank_\R\big(L_u,L_v\big)
\]
for some linearly independent $u,v\in\OO$.

It remains to bound the residual pencil. Put $M=L_{\bar u}L_v$. Polarizing
$L_{\bar x}L_x=\Nm(x)I$ gives $L_{\bar x}L_y+L_{\bar y}L_x=2\langle
x,y\rangle I$; applied to the pair $(\bar v,\bar u)$, and using that
conjugation preserves the inner product, this reads
$L_vL_{\bar u}=2\langle u,v\rangle I-L_uL_{\bar v}$. Hence, by associativity
of composition of linear maps (no associativity of $\OO$ enters),
\[
  M^2=L_{\bar u}\big(L_vL_{\bar u}\big)L_v
     =2\langle u,v\rangle\,M-\big(L_{\bar u}L_u\big)\big(L_{\bar v}L_v\big)
     =2\langle u,v\rangle\,M-\Nm(u)\Nm(v)\,I .
\]
The discriminant of the quadratic $t^2-2\langle u,v\rangle t+\Nm(u)\Nm(v)$ is
$4\big(\langle u,v\rangle^2-\Nm(u)\Nm(v)\big)$, strictly negative by the
strict Cauchy--Schwarz inequality since $u,v$ are independent. As $L_u$ is
invertible (from $L_{\bar u}L_u=\Nm(u)I$ with $\Nm(u)>0$), the matrix
$C:=L_u^{-1}L_v=\Nm(u)^{-1}M$ satisfies an irreducible real quadratic, and
the pencils $(L_u,L_v)$ and $(I,C)$ have the same rank.
Theorem~\ref{thm:pencil} with $n=8$ gives $\rank_\R(L_u,L_v)\ge 12$, whence
\[
  \rank_\R(T_\OO)\ \ge\ 6+12\ =\ 18 . \qedhere
\]
\end{proof}

The proof used no property of $\OO$ beyond three facts: the composition
identity $L_{\bar x}L_x=\Nm(x)I$, the positive-definiteness of $\Nm$, and
$\dim\OO=8$. Replacing $8$ by $n$ throughout gives a uniform statement.

\begin{theorem}\label{thm:uniform}
Let $A$ be a real unital composition algebra (a Hurwitz algebra) of even
dimension $n$ whose norm form is positive definite. Then
\[
  \rank_\R(T_A)\ \ge\ \tfrac{5}{2}n-2 .
\]
\end{theorem}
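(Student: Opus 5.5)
The plan is to repeat the proof of Theorem~\ref{thm:main-lower} verbatim with $8$ replaced by $n$, checking that each step uses only the three listed facts. Take the first-factor slices $L_{e_0},\dots,L_{e_{n-1}}$ of $T_A$ for an orthonormal basis of $A$ (with respect to the polar form of $\Nm$). The map $x\mapsto L_x$ is linear and injective, since $L_x 1 = x$ by unitality. Hence every nonzero linear combination of the slices is $L_x$ for some nonzero $x$, and linear independence of the underlying elements is preserved under the substitutions $x_i\mapsto x_i+c_ix_{m+1}$.

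The peeling step: apply Lemma~\ref{lem:subst} to the last slice. The hypothesis $S_{m+1}\neq0$ holds because $L_{x_{m+1}}$ is invertible; indeed $L_{\bar x}L_x=\Nm(x)I$ with $\Nm(x)>0$ by positive definiteness. Iterate $n-2$ times, which is possible since $n\ge 2$ as $n$ is even. Each application costs one unit of rank and leaves a family of left multiplications by linearly independent elements. This yields
\[
  \rank_\R(T_A)\ \ge\ (n-2)+\rank_\R(L_u,L_v)
\]
with $u,v\in A$ linearly independent.

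For the residual pencil, set $M=L_{\bar u}L_v$ and reproduce the computation $M^2=2\langle u,v\rangle M-\Nm(u)\Nm(v)I$. The only inputs are the polarized identity $L_{\bar x}L_y+L_{\bar y}L_x=2\langle x,y\rangle I$, obtained by linearizing $L_{\bar x}L_x=\Nm(x)I$ (conjugation is linear), and the fact that conjugation is an isometry for $\langle\cdot,\cdot\rangle$, which holds in any Hurwitz algebra. Strict Cauchy--Schwarz for the positive definite form gives $\langle u,v\rangle^2<\Nm(u)\Nm(v)$. Then $C=L_u^{-1}L_v=\Nm(u)^{-1}M$ satisfies $C^2-2aC+bI=0$ with $a^2<b$, namely $a=\langle u,v\rangle/\Nm(u)$ and $b=\Nm(v)/\Nm(u)$. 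Since the pencils $(L_u,L_v)$ and $(I,C)$ have equal rank, Theorem~\ref{thm:pencil} (which requires $n$ even) gives $\rank_\R(L_u,L_v)\ge \tfrac32 n$. Altogether
\[
  \rank_\R(T_A)\ \ge\ n-2+\tfrac32 n=\tfrac52 n-2.
\]

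No step is a real obstacle. The only point needing care is confirming that the polarization and the conjugation isometry hold for a general Hurwitz algebra rather than just $\OO$. I would justify these from the standard identities $\bar x=2\langle x,1\rangle 1-x$ and $x\bar x=\Nm(x)1$, together with the left-multiplication form of the composition identity, which the statement presupposes as in the notation section.
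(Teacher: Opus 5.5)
Your proposal is correct and follows the paper's proof step for step. You peel $n-2$ slices with Lemma~\ref{lem:subst}, using that $x\mapsto L_x$ is injective so independence survives. You then derive the quadratic relation for $C=L_u^{-1}L_v$ (your $a=\langle u,v\rangle/\Nm(u)$ and $b=\Nm(v)/\Nm(u)$ are right) and apply Theorem~\ref{thm:pencil}, for a total of $(n-2)+\tfrac32 n$. Your explicit check that the polarization and the conjugation isometry hold in a general Hurwitz algebra is detail that the paper's ``as in the proof of Theorem~\ref{thm:main-lower}'' leaves implicit.
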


\begin{proof}
The slices of $T_A$ are the $n$ operators $L_{e_0},\dots,L_{e_{n-1}}$, and
every nonzero linear combination of them is $L_x$ for some $x\neq 0$, hence
invertible. Lemma~\ref{lem:subst} applies $n-2$ times, each application
preserving both the form of the family and the linear independence of the
underlying algebra elements, and leaves a pencil $(L_u,L_v)$ with $u,v$
independent. As in the proof of Theorem~\ref{thm:main-lower}, $C=L_u^{-1}L_v$
satisfies a real quadratic with negative discriminant, so
Theorem~\ref{thm:pencil} gives $\rank_\R(L_u,L_v)\ge n+n/2$. Hence
$\rank_\R(T_A)\ge(n-2)+n+n/2=\frac52 n-2$.
\end{proof}

By Hurwitz's theorem the algebras satisfying the hypotheses are $\C$, $\HH$ and
$\OO$, of dimensions $2$, $4$ and $8$, so Theorem~\ref{thm:uniform} reads
\[
  \rank_\R(T_\C)\ge 3,\qquad \rank_\R(T_\HH)\ge 8,\qquad
  \rank_\R(T_\OO)\ge 18 .
\]
The first two are the exact ranks \cite{deGroote,HowellLafon}, so on the
classical cases the bound is sharp. It also exceeds the Fiduccia--Zalcstein
bound $2n-1$ for every $n>2$: for $\HH$ it gives the sharp value $8$ where
$2n-1=7$ falls short, and for $\OO$ it gives $18$ against $15$.

It is worth isolating where that gain comes from. Peeling all the way down to a
single invertible slice, of rank $n$, would give only $(n-1)+n=2n-1$, exactly
the Fiduccia--Zalcstein bound \cite{FiducciaZalcstein}, which holds for any
finite-dimensional algebra without zero divisors and uses no associativity. In
the associative world this coincides with the Alder--Strassen bound
\cite{AlderStrassen} for an algebra with a single maximal two-sided ideal.
Halting one step earlier, at the two-slice pencil, and invoking
Theorem~\ref{thm:pencil} instead, replaces the final $n$ by $n+n/2$ and so adds
$n/2-1$; for $\OO$ that is the difference between $15$ and $18$.

The same two ingredients also settle a tensor that is not the multiplication
tensor of a composition algebra. Let $\tau$ be the tensor of quaternion
multiplication with the left factor restricted to the span of $1$,
$i$ and $j$, so that its slices are the left multiplications $(I,L_i,L_j)$ on
$\HH\cong\R^4$: one substitution, then Theorem~\ref{thm:pencil} with $n=4$,
which contributes $4+4/2=6$, gives $\rank_\R(\tau)\ge 7$. Unlike the octonion
case it is matched by a certificate below, so the rank of $\tau$ is determined
exactly.

Theorems~\ref{thm:main-lower} and~\ref{thm:pencil} and the bound
$\rank_\R(\tau)\ge 7$ are formalized in Lean~4 \cite{Lean4} over the mathlib
library \cite{Mathlib}. The peeling step underlying Theorem~\ref{thm:uniform}
is formalized in its parametric form, for an arbitrary injective linear family
$\Phi$ whose pencils have rank at least $n+n/2$; what is discharged inside the
kernel for a particular algebra is that pencil hypothesis, which is done for
$\OO$ and for $\tau$. The cases $\C$ and $\HH$ of
Theorem~\ref{thm:uniform} are recorded above as consequences of the same
argument and are not separately formalized. The octonion multiplication table
is given explicitly, and the two facts the proof uses (the composition identity
$L_{\bar x}L_x=\Nm(x)I$ and its polarization) are proved from it as theorems,
quantified over all $x$; together with the unit property, these characterize
$\OO$ up to isomorphism by Hurwitz's theorem (Remark~\ref{rem:composition};
cf.\ \cite{SpringerVeldkamp}). The
final proof terms are accepted by Lean's kernel and depend only on its three
foundational axioms: propositional extensionality, choice, and quotient
soundness.

\begin{remark}\label{rem:composition}
The proof of Theorem~\ref{thm:main-lower} uses the octonions only through two
features: the composition identity $L_{\bar x}L_x=\Nm(x)I$, together with the
invertibility of $L_x$ for $x\neq 0$, and the positive-definiteness of $\Nm$,
which enters through strict Cauchy--Schwarz. By Hurwitz's theorem
\cite{SpringerVeldkamp} the octonions are the unique eight-dimensional real
unital composition algebra with positive definite norm, so within that class
the statement is specific to $\OO$.

Both bounds transfer verbatim to the eight-dimensional real Okubo
(pseudo-octonion) algebra, by the same substitution invariance recorded in the
notation, now applied in all three factors at once. The
Okubo product is a Petersson twist of octonion multiplication,
$x\circ y=\varphi(\bar x)\cdot\varphi^{2}(\bar y)$, where $\varphi$ is an
order-three automorphism of $\OO$ and $\bar x$ is octonionic conjugation
\cite{Elduque,Okubo}; as $\varphi$, $\varphi^{2}$, and conjugation are
invertible linear maps, the Okubo structure tensor is obtained from $T_\OO$ by
exactly such a substitution. Its real rank is therefore unchanged, so
$18\le \rank_\R(T_{\mathrm{Ok}})\le 25$ as well. The same invariance applies to
the para-octonion algebra $x\circ y=\bar x\cdot\bar y$, whose structure tensor
differs from $T_\OO$ only by conjugating the two arguments.
\end{remark}

\section{The upper bound}

Write $F\colon\R^{600}\to\R^{512}$ for the residual map of the rank-$25$ model:
for factor matrices $(A,B,C)$, $F(A,B,C)$ is the difference between
$\sum_s A_{\cdot s}\otimes B_{\cdot s}\otimes C_{\cdot s}$ and $T_\OO$, a
trilinear map whose zeros are exactly the representations of $T_\OO$ as a sum
of $25$ rank-one tensors. Freezing $88$ of the $600$ factor coordinates (below)
leaves a map $\R^{512}\to\R^{512}$, to which the following instance of the
Krawczyk--Kantorovich principle \cite{Krawczyk,Moore,Rump,Breiding} applies.

\begin{proposition}\label{prop:krawczyk}
Let $g\colon\R^{N}\to\R^{N}$ be differentiable, let $x_0\in\R^{N}$, let $Y$ be
an $N\times N$ real matrix with $v\mapsto Yv$ injective, and let $\rho>0$ and
$K<1$ satisfy
\begin{enumerate}
\item[\textup{(i)}] $\big\lVert D\big(x\mapsto x-Y g(x)\big)(x)\big\rVert\le K$
      for every $x$ in the closed box $B_\infty(x_0,\rho)$, and
\item[\textup{(ii)}] $\lVert Y g(x_0)\rVert_\infty\le(1-K)\rho$.
\end{enumerate}
Then $g$ has a zero in $B_\infty(x_0,\rho)$.
\end{proposition}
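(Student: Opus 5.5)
We will prove the proposition by the contraction mapping principle on the box, applied to the Newton-like map $h(x)=x-Y g(x)$. Its fixed points are exactly the zeros of $g$: if $h(x)=x$ then $Yg(x)=0$, and injectivity of $v\mapsto Yv$ forces $g(x)=0$. This is the only place the injectivity hypothesis enters. It therefore suffices to show that $h$ maps the closed box $B:=B_\infty(x_0,\rho)$ into itself and is a contraction there in $\lVert\cdot\rVert_\infty$. We read the norm in (i) as the operator norm induced by $\lVert\cdot\rVert_\infty$, that is, the maximum absolute row sum, consistently with (ii).

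The first step is a Lipschitz estimate. For $x,y\in B$ the segment between them lies in $B$, since the box is convex. Applying the mean value inequality to the differentiable map $h$ along that segment gives
\[
  \lVert h(x)-h(y)\rVert_\infty\le \sup_{z\in[x,y]}\lVert Dh(z)\rVert\,\lVert x-y\rVert_\infty\le K\lVert x-y\rVert_\infty .
\]
We only need the componentwise form: apply the one-variable mean value theorem to $t\mapsto h_i(y+t(x-y))$ for each coordinate $i$, then bound each entry by the row sum. This requires only differentiability, not $C^1$, which matches the hypothesis.

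The second step is self-mapping. For $x\in B$ we split
\[
  h(x)-x_0=\big(h(x)-h(x_0)\big)+\big(h(x_0)-x_0\big),
\]
and note that $h(x_0)-x_0=-Yg(x_0)$. Hence $\lVert h(x)-x_0\rVert_\infty\le K\rho+(1-K)\rho=\rho$ by the first step and (ii), so $h(B)\subseteq B$. If $K<0$ the hypothesis (i) is vacuous, since $K\ge 0$ whenever $B$ is nonempty; we may therefore take $0\le K<1$.

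The third step concludes. $B$ is a closed subset of the complete space $(\R^N,\lVert\cdot\rVert_\infty)$ and $h|_B$ is a $K$-contraction, so Banach's fixed point theorem yields a fixed point $x^*\in B$. By the first paragraph, $g(x^*)=0$. An alternative that avoids contraction is Brouwer's theorem, using only the second step; the contraction route is more elementary and gives uniqueness for free.

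The main obstacle is the mean value step in the $\infty$-norm with only pointwise differentiability. We handle it coordinatewise as described, using that the segment stays inside the box so that (i) applies at every intermediate point. Everything else is bookkeeping.
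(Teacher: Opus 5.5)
Your proposal is correct and follows the paper's proof essentially step for step: the mean value inequality on the convex box makes $x\mapsto x-Yg(x)$ a $K$-contraction, hypothesis (ii) gives self-mapping, Banach's fixed-point theorem supplies a fixed point, and injectivity of $Y$ turns that fixed point into a zero of $g$. Your coordinatewise treatment of the mean value step in the $\infty$-norm, and your observation that (i) forces $K\ge 0$ (so the case $K<0$ cannot arise), make explicit details the paper leaves implicit.
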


\begin{proof}
By (i) and the mean value inequality the Newton map $x\mapsto x-Y g(x)$ is
$K$-Lipschitz on the box, and by (ii) it maps the box into itself; since $K<1$
the Banach fixed-point theorem gives a fixed point $x^\star$ there, so
$Y g(x^\star)=0$, and injectivity of $Y$ gives $g(x^\star)=0$.
\end{proof}

Proposition~\ref{prop:krawczyk} is the statement formalized in Lean, to the
same standard as the lower bound: it is proved from mathlib's Banach
fixed-point theorem and mean value inequality, and its proof term is accepted
by the kernel under the same three axioms.

\begin{theorem}
$\rank_\R(T_\OO)\le 25$.
\end{theorem}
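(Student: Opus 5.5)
The plan is to apply Proposition~\ref{prop:krawczyk} to the square system obtained from the residual map $F$ by freezing $88$ coordinates. First I would produce a numerical rank-$25$ decomposition $(A_0,B_0,C_0)$ of $T_\OO$, with $A_0,B_0,C_0\in\R^{8\times 25}$. I would obtain it by alternating least squares or Levenberg--Marquardt from random starts, and polish it to residual near machine precision. Experience with similar tensors suggests many restarts will be needed, and that some starts drift toward border-rank degenerations with exploding factors. I would reject those and keep a solution whose factor norms stay moderate.

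Next I would remove the continuous freedom. Each rank-one term carries a two-parameter scaling $(\lambda A_{\cdot s},\mu B_{\cdot s},(\lambda\mu)^{-1}C_{\cdot s})$, which accounts for $50$ dimensions. The remaining $38$ frozen coordinates are meant to kill the residual positive-dimensional fibre, which contains in particular the orbit of the stabilizer of $T_\OO$, of dimension at least $\dim G_2=14$, together with any further degeneracy. I would choose the $88$ frozen coordinates by a column-pivoted QR on the $512\times600$ Jacobian $DF$ at the numerical point, freezing the columns least needed to achieve full rank $512$. Each frozen value would be rounded to a nearby rational. The result is a polynomial map $g\colon\R^{512}\to\R^{512}$ whose Jacobian $Dg(x_0)$ at the rationalized point $x_0$ is numerically nonsingular. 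Polishing by Newton's method on $g$ keeps the residual tiny.

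For the certificate I take $Y$ to be a rational approximation of $Dg(x_0)^{-1}$. Injectivity of $Y$ is checked exactly, for instance by computing $\det Y\neq0$ modulo a prime, or by verifying $\lVert I-YDg(x_0)\rVert_\infty<1$ in rational arithmetic. Condition~(ii) is an exact rational evaluation of $\lVert Yg(x_0)\rVert_\infty$. For condition~(i), $g$ is multilinear of degree $3$, so on the box $B_\infty(x_0,\rho)$ I write
\[
I-YDg(x)=\bigl(I-YDg(x_0)\bigr)+Y\bigl(Dg(x_0)-Dg(x)\bigr).
\]
The first term is a fixed rational matrix. The second is bounded entrywise by a polynomial in $\rho$ built from $\lvert Y\rvert$ and the absolute values of the factor entries, since each entry of $Dg$ is bilinear in the coordinates. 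Summing gives an exact rational upper bound $K(\rho)$ on the induced $\infty$-norm. I would then pick $\rho=10^{-6}$ and verify $K<1$ and $\lVert Yg(x_0)\rVert_\infty\le(1-K)\rho$ with exact rationals. Proposition~\ref{prop:krawczyk} then yields an exact zero of $g$. Together with the frozen coordinates, this zero gives an exact real rank-$25$ decomposition, so $\rank_\R(T_\OO)\le25$.

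The main obstacle is the first step: finding a genuine, well-conditioned rank-$25$ numerical decomposition rather than a border-rank approximation. Only such a solution makes $Dg(x_0)$ nonsingular with a modest condition number, and that in turn is what makes $K<1$ achievable at a radius where~(ii) holds. If conditioning is poor, I would first try changing the set of frozen coordinates. Failing that, I would restart the search with a regularization term penalizing large factor norms.
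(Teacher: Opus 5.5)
Your plan is the paper's argument. You freeze $88$ coordinates so that the full-row-rank $512\times600$ Jacobian becomes a nonsingular $512\times512$ system, and take $Y$ to be a rational approximate inverse of $J_S$. You bound $J_S(x)-J_S(x_0)$ entrywise using trilinearity (the paper's $\rho W_1+\rho^2W_2$), get injectivity of $Y$ from $\lVert I-YJ_S\rVert_\infty<1$, and verify both Krawczyk hypotheses in exact rational arithmetic at $\rho=10^{-6}$. The only difference is that the paper's proof rests on an actual certificate ($x_0$, $Y$, the frozen set) for which these inequalities hold with $K\approx0.32$, whereas yours is a recipe whose success depends on the numerical search producing such a well-conditioned point, which you correctly identify as the crux.
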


\begin{proof}
Apply Proposition~\ref{prop:krawczyk} with $N=512$ to the restriction $g$ of
$F$ to the $512$ free coordinates, the remaining $88$ held at their values in
$x_0$, taking the $x_0$, $Y$ and $\rho=10^{-6}$ of the certificate in the
repository. Hypotheses (i) and (ii) hold with $K\approx0.32$: read as dyadic
rationals, the two inequalities evaluate to strictly true relations between
exact rationals, hypothesis (ii) with a factor of about $5\times10^{7}$ to
spare (the evaluation is described below). The proposition yields $x^\star$ in
the box with $g(x^\star)=0$, hence $F(x^\star)=0$, which is an exact expression
of $T_\OO$ as a sum of $25$ rank-one tensors.
\end{proof}

It remains to describe the evaluation. The Jacobian of $F$ at $x_0$ has full
row rank $512$; fixing $88$
of the $600$ factor coordinates squares it to a $512\times512$ system whose
Jacobian, the restriction $J_S$ of $J(x_0)$ to the $512$ free columns $S$, is
inverted (approximately) by $Y$. Since $F$ is trilinear, on the box
$B_\infty(x_0,\rho)$ the free-column Jacobian obeys the entrywise enclosure
\[
  \big|J_S(x)-J_S(x_0)\big|_{e,c}\ \le\ \rho\,W_1[e,c]+\rho^2\,W_2[e,c],
\]
where, writing $\mathbf 1[\cdot]$ for the indicator that a coordinate is free,
the column indexed by $A_{is}$ contributes, at equation $e=(i,j,k)$,
\[
  W_1=\mathbf 1[B_{js}]\,|C_{ks}|+\mathbf 1[C_{ks}]\,|B_{js}|,\qquad
  W_2=\mathbf 1[B_{js}]\,\mathbf 1[C_{ks}],
\]
and cyclically for the columns indexed by $B_{js}$ and $C_{ks}$ (a frozen
coordinate contributing $0$). The Krawczyk condition to be verified is then, for
every equation index $e$,
\[
  \rho-\Big(\,\big|Y F(x_0)\big|_e+\rho\,r^0_e+\rho^2\,r^1_e+\rho^3\,r^2_e\,\Big)>0,
\]
with $r^0$ the row sums of $|I-Y J_S|$ and $r^1,r^2$ the row sums of $|Y|W_1$
and $|Y|W_2$; the maximum row sum of $|I-Y J_S|+\rho|Y|W_1+\rho^2|Y|W_2$ is the
contraction constant $K$ of hypothesis (i). Reading the certificate as dyadic
rationals and evaluating these radii in exact rational arithmetic discharges
both hypotheses of Proposition~\ref{prop:krawczyk} as strict inequalities
between exact rationals: $K\approx0.32<1$, and
$\lVert Y F(x_0)\rVert_\infty\approx1.3\times10^{-14}$ against
$(1-K)\rho\approx6.8\times10^{-7}$. The injectivity of $Y$, the standing
hypothesis of the proposition, comes with the contraction: $K<1$ keeps
each row sum of $|I-Y J_S|$ below $1$, so $\lVert I-Y J_S\rVert_\infty<1$ and
$Y J_S$, hence the square matrix $Y$, is invertible. The per-equation
inequalities above are the
sharper coordinatewise inclusion radii, likewise strictly positive rationals in
the same pass (worst value $6.8\times10^{-7}$). The preconditioner $Y$ need not
be inverted exactly, only satisfy the inequalities, so no floating-point step
enters. What the Lean kernel proves is Proposition~\ref{prop:krawczyk}; the
exact evaluation of these finite inequalities for the certificate at hand is
carried out outside the kernel, by the scripts in the repository.

The starting point $x_0$ was found by alternating least squares with
Gauss--Newton refinement; it is a numerical decomposition, accurate to about
$10^{-14}$ but not exact. The certified zero $x^\star$ nearby is exact, so the
bound is proved by certifying that an exact rank-$25$ decomposition exists
rather than by exhibiting one: since $x^\star$ is an exact zero of $F$, and
hence an exact decomposition, no question of border-rank degeneracy arises. The
coordinates of $x^\star$ are known only to lie in $B_\infty(x_0,\rho)$; they are
not rational, and no closed form is displayed.

The same exact evaluation applies verbatim to the three-slice quaternion
tensor $\tau$ (worst margin $\approx 9.9\times 10^{-6}$ at radius $10^{-5}$),
which with the formal lower bound gives
\[
  \rank_\R(\tau)=7 .
\]
The tensor $\tau$ is \emph{absolutely nonsingular} (every nonzero real
combination of its slices is $L_q$ for a nonzero quaternion $q$, hence
invertible) and so belongs to the class central to the typical-rank analysis
of Miyazaki, Sumi, and Sakata~\cite{SMStypical}. Those results concern typical
ranks and do not determine the rank of this particular tensor; the lower bound
$\rank_\R(\tau)\ge 7$ follows from the substitution argument with the pencil
bound of Theorem~\ref{thm:pencil} (equivalently, the invariant-factor analysis
of~\cite{SMS}), and the matching upper bound is the certificate above.

\section*{Acknowledgment}
The technical results in this paper were developed by the large language models
Opus 4.8 and GPT 5.6 Sol, directed by the author. Correctness does not rest
on the models: the lower bound is accepted by Lean's kernel, and the
certificates are re-checked by the scripts in the repository.

\end{document}